\newtheorem{theorem}{Theorem}[section]
\newcommand{\xdashrightarrow}[2][]{\ext@arrow 0359\rightarrowfill@@{#1}{#2}}
\def\rightarrowfill@@{\arrowfill@@\relax\relbar\rightarrow}
\def\arrowfill@@#1#2#3#4{%
  $\m@th\thickmuskip0mu\medmuskip\thickmuskip\thinmuskip\thickmuskip
   \relax#4#1
   \xleaders\hbox{$#4#2$}\hfill
   #3$%
}
\begin{document}

\title{A new approach to simulating stochastic delayed systems}
\author{F. Fatehi$^{\rm 1}$, Y.N. Kyrychko$^{\rm 2}$, K.B. Blyuss$^{\rm 2}$\footnote{Corresponding author: K.Blyuss@sussex.ac.uk}}

\affil{$^{\rm 1}$ Department of Mathematics, University of York, York, YO10 5DD, UK}

\affil{$^{\rm 2}$ Department of Mathematics, University of Sussex, Falmer, Brighton, BN1 9QH, UK}

\maketitle

\begin{abstract}
In this paper we present a new method for deriving It\^{o} stochastic delay differential equations (SDDEs) from delayed chemical master equations (DCMEs). Considering alternative formulations of SDDEs that can be derived from the same DCME, we prove that they are equivalent both in distribution, and in sample paths they produce. This allows us to formulate an algorithmic approach to deriving equivalent It\^{o} SDDEs with a smaller number of noise variables, which increases the computational speed of simulating stochastic delayed systems. The new method is illustrated on a simple model of two interacting species, and it shows excellent agreement with the results of direct stochastic simulations, while also demonstrating a much superior speed of performance.  
\end{abstract}

\section{Introduction}

Stochastic models have successfully been used to study the dynamics of numerous biological processes across various scales, from gene regulation \cite{swain02,oz02} and immunology \cite{chan98,rib99,yuan11} to epidemics \cite{allen08,britton10} and population ecology \cite{matis}. Some of the most common methodologies used to analyse stochastic effects in biological models are continuous-time Markov chains (CTMC), discrete-time Markov chains (DTMC), and stochastic differential equations (SDEs) \cite{allen2007b}. Focusing on continuous-time models, CTMC are formulated in terms of probabilities of transitions between different states under memoryless assumption, and they result in the forward Kolmogorov equation, also known as the chemical master equation (CME), which, with an exception of some very simple examples, cannot be solved analytically. To make further analytical progress, one can then either use the CME to derive a system of equations for moments of the distribution and use some higher-order approximation to make this a closed system of differential equations, or one can use approximations, such as van Kampen or Kramers-Moyal expansions \cite{van1992,gardiner}, to obtain Gaussian approximations for dynamics around deterministic trajectories. Alternatively, one can solve the CME numerically using, e.g., Gillespie's exact stochastic simulation algorithm (SSA) \cite{gillespie77} or some alternative formulations \cite{cao04,gibson00,anderson07}. Another approach is to use forward Kolmogorov equation to reformulate the problem as an SDE, which, for a large system size would provide a good approximation of the underlying CTMC dynamics \cite{eka10}. Although being only an approximation of the exact stochastic dynamics, this approach has a major advantage of being very computationally efficient, since numerical solutions of SDEs can be found at a fraction of time required for the full simulation of the original CTMC. A very recent paper by Warne et al. \cite{warne19} provides a nice overview of these and other different approaches to simulating biochemical reactions.

Besides stochasticity, many biological processes are also characterised by non-negligible time delays, such as, intracellular delays associated with gene transcription and translation \cite{Hirata02,jensen03}, latency and immunity periods in epidemics \cite{heth89,lloyd00}, or maturation period in ecology \cite{kuang}. Thus, it is essential to correctly account for those delays in corresponding mathematical models. Similar to CMEs for non-delayed models, one can analyse stochastic delayed systems using the {\it delay chemical master equation} (DCME) that describes the exact probability distribution of finding the system in a particular state \cite{Bratsun2005,Barrio2006,tian2007}. Leier and Marquez-Lago \cite{leier2015} have presented a general framework of DCMEs, which covers both consuming and non-consuming delayed reactions, and applies not only to fixed time delays, but also to delay distributions. They showed how one can obtain closed-form solutions of the DCME for some simple reaction schemes. Galla \cite{galla2009} showed how one can perform a system-size expansion of the DCME to obtain a delayed Langevin equation describing fluctuations around solutions of the deterministic models (see also Guillouzic et al. \cite{guillouzic1999} and Phillips et al. \cite{phillips2016} for further examples of using this approach). Brett and Galla \cite{brett_galla2013,brett_galla2014} showed how one can derive chemical Langevin equation describing deterministic limit and linear-noise approximation around it for stochastic models with distributed delay, without using a master equation, but instead relying on the generating functional approach.

In terms of numerical simulations of stochastic delayed models, one of the first approaches to modelling the combined effects of a time delay and intrinsic noise was proposed by Bratsun et al. \cite{Bratsun2005} in the context of gene regulation. They developed a truncated master equation for a set of biochemical reactions, some of which are delayed, and also introduced modifications to the Gillespie algorithm to incorporate delayed reactions. Barrio et al. \cite{Barrio2006} developed a delay stochastic simulation algorithm (DSSA) based on the so-called `rejection method', which accounts for waiting times and also provides a method for simulating {\it consuming delayed reactions}, defined as such reactions where the reactants of an unfinished reaction cannot participate in a new reaction. In this respect, the rejection method is superior to the algorithm of Bratsun et al. \cite{Bratsun2005}, which can only be used for simulating {\it non-consuming delayed reactions}, in which the reactants of an unfinished reaction can also participate in other reactions. Zavala and Marquez-Lago \cite{Zavala2014} have used rejection algorithm to study stochastic effects in a simple genetic circuit with negative feedback and transcriptional/translational delays. Subsequently, Cai \cite{Cai2007} developed a so-called `direct algorithm' and showed that this method, as well as the rejection method of Barrio et al. \cite{Barrio2006}, is exact, with the direct algorithm being faster and generating fewer additional random variables. More recently, Thanh et al. \cite{Thanh2014,Thanh2017} proposed some further DSSAs with improved computational performance. Marquez-Lago et al. \cite{lago10} developed a DSSA that can work not only with discrete delay, but also with delay distributions.

Since using DSSAs can be very computationally demanding \cite{phillips2016,Niu2015b}, one can use SDDEs that obtain an approximation for DCMEs in the same way as SDEs provide an approximation for CMEs, with the advantage of such approach being much more computationally efficient. Tian et al. \cite{tian2007} developed two methods for deriving SDDEs from discrete delayed stochastic models with non-consuming delayed reactions, and then used the Euler-Maruyama method for solving them for fixed time delay, as well as for time delay obeying a uniform distribution or being a Gaussian random variable. The results of simulations on a simple model of gene regulatory network showed small differences in means and variances between two SDDE models. As an alternative, Niu et al. \cite{Niu2015a,Niu2015b} have introduced a strong predictor-corrector method for numerical solution of SDDEs and showed that its asymptotic mean-square stability bound is much larger than that of the Euler-Maruyama method, while its implementation is much more efficient. Frank \cite{Frank2002} has shown how the probability distribution of a SDDE can be described analytically as a solution of a delayed Fokker-Planck equation (DFPE), and also proposed a method for deriving a DFPE directly from SDDEs \cite{Frank}. 

In this paper, we propose a method for deriving It\^{o} SDDEs from DCMEs for different types of delayed reactions, which generalises the methodology of Tian et al. \cite{tian2007} to also include consuming delayed reactions. We will adapt an approach used by Allen et al. \cite{allen2008} for non-delayed stochastic equations to prove that alternative forms of such SDDEs are equivalent both in distribution, and in sample path trajectories they produce, thus addressing the above-mentioned issue of small differences between numerical realisations of alternative SDDEs in Tian et al. \cite{tian2007}. This allows us to formulate an algorithmic approach for deriving a computationally efficient It\^{o} SDDE with a smaller number of noise variables. Using an example of a system with two interacting species that contains non-delayed and delayed reactions (both non-consuming, and consuming), we will illustrate the efficiency of our method in terms of computational speed and comparison with direct simulation using DSSA.

\section{It{\^o} SDDE models and their equivalence}

As a starting point, we consider a system of  $N$ molecular species (which can also represent cells, biological populations etc.) $S = \{S_1,\ldots, S_N\}$, whose state at time $t$ is described by a vector $\textbf{X}(t)=(X_1(t),\ldots,X_N(t))$, and these species react through reactions $\{R_1,\ldots,R_m\}$. Each reaction $R_j$ is characterised by a state change vector ${\bf v}_j=(v_{1j},v_{2j},\ldots,v_{Nj})^T$, and an associated propensity function $a_j$. One has to explicitly distinguish between non-consuming and consuming delayed reactions, because non-delayed and non-consuming delayed reactions have a single update vector ${\bf v}$, whereas for delayed consuming reactions, $\textbf{v}_j^r$ and $\textbf{v}_j^p$ are the update vectors for reactants at the start of reaction, and for products at the end of the time delay associated with reaction $R_j$, respectively, so $\textbf{v}_j^r+\textbf{v}_j^p=\textbf{v}_j$. Assuming the first $m_1$ reactions to be non-delayed, the reactions $m_1+1$ to $m_2$ to be delayed non-consuming reactions with corresponding time delays $\tau_{m_1+1},\ldots,\tau_{m_2}$, and the rest to be consuming delayed reactions with time delays $\tau_{m_2+1},\ldots,\tau_{m}$, the DCME accounting for all non-consuming and consuming reactions is then given by \cite{leier2015}
\begin{align}\label{DCME1}
\dfrac{\partial}{\partial t}&P(\textbf{X},t)=-\sum\limits_{j=1}^{m_1}a_j(\textbf{X})P(\textbf{X},t)+\sum\limits_{j=1}^{m_1}a_j(\textbf{X}-\textbf{v}_j)P(\textbf{X}-\textbf{v}_j,t) \nonumber\\
&-\sum\limits_{j=m_1+1}^{m_2}\sum\limits_{\textbf{X}_i\in I(\textbf{X})}a_j(\textbf{X}_i)P(\textbf{X},t;\textbf{X}_i,t-\tau_j)+\sum\limits_{j=m_1+1}^{m_2}\sum\limits_{\textbf{X}_i\in I(\textbf{X})}a_j(\textbf{X}_i)P(\textbf{X}-\textbf{v}_j,t;\textbf{X}_i,t-\tau_j) \nonumber\\
&-\sum\limits_{j=m_2+1}^{m}\sum\limits_{\textbf{X}_i\in I(\textbf{X})}a_j(\textbf{X}_i)P(\textbf{X},t;\textbf{X}_i,t-\tau_j)+\sum\limits_{j=m_2+1}^{m}\sum\limits_{\textbf{X}_i\in I(\textbf{X})}a_j(\textbf{X}_i)P(\textbf{X}-\textbf{v}_j^p,t;\textbf{X}_i,t-\tau_j) \nonumber\\
&-\sum\limits_{j=m_2+1}^{m}a_j(\textbf{X})P(\textbf{X},t)+\sum\limits_{j=m_2+1}^{m}a_j(\textbf{X}-\textbf{v}_j^r)P(\textbf{X}-\textbf{v}_j,t),
\end{align}
where $I(\textbf{X})$ is the set of all possible system states in the past, from which the given state $\textbf{X}$ can follow via a chain of reactions, and $P(\textbf{X},t;\textbf{X}_i,t-\tau_i)$ is the joint probability of finding the system in state $\textbf{X}$ at time $t$, and in state $\textbf{X}_i$ at time $t-\tau_i$, with $P(\textbf{X},t)=P(\textbf{X},t;\textbf{X}_0,t_0,\mathcal{H}_0)$, where $\mathcal{H}_0$ is initial history. Let $\textbf{Y}(t)=(Y_1(t),Y_2(t),\ldots,Y_N(t))^T$ be a vector of continuous random variables representing the amounts of molecular species at time $t$. Applying the methodology as used in Tian et al. \cite{tian2007} for systems without consuming delays, the corresponding SDDE model which faithfully represents the intrinsic noise associated with all those delayed reactions, has the form
\begin{align}\label{SDDE model}
d\textbf{Y}=&\sum\limits_{j=1}^{m_1}\textbf{v}_ja_j(\textbf{Y}(t))dt+\sum\limits_{j=m_1+1}^{m_2}\textbf{v}_ja_j(\textbf{Y}(t-\tau_j))dt \nonumber\\
&+\sum\limits_{j=m_2+1}^{m}\textbf{v}_j^ra_j(\textbf{Y}(t))dt+\sum\limits_{j=m_2+1}^{m}\textbf{v}_j^pa_j(\textbf{Y}(t-\tau_j))dt \nonumber\\
&+\sum\limits_{j=1}^{m_1}\textbf{v}_j\sqrt{a_j(\textbf{Y}(t))}dW_j(t)+\sum\limits_{j=m_1+1}^{m_2}\textbf{v}_j\sqrt{a_j(\textbf{Y}(t-\tau_j))}dW_j(t) \nonumber\\
&+\sum\limits_{j=m_2+1}^{m}\textbf{v}_j^r\sqrt{a_j(\textbf{Y}(t))}dW_j(t)+\sum\limits_{j=m_2+1}^{m}\textbf{v}_j^p\sqrt{a_j(\textbf{Y}(t-\tau_j))}dW_{j-m_2+m}(t)\nonumber\\
=&\textbf{f}\left(\textbf{Y}(t),\textbf{Y}(t-\tau_{m_1+1}),\dots,\textbf{Y}(t-\tau_m)\right)dt+H\left(\textbf{Y}(t),\textbf{Y}(t-\tau_{m_1+1}),\dots,\textbf{Y}(t-\tau_m),t\right)d\textbf{W}(t),
\end{align}
where $\textbf{W}(t)=(W_1(t),W_2(t),\ldots,W_{2m-m_2})^T$ is a vector of independent Wiener processes, and\\$H=\big(H_1\quad H_2\quad H_3\quad H_4\big)$ is a $N\times (2m-m_2)$ matrix which
\[
H_1=\begin{pmatrix}
v_{11}\sqrt{a_1(\textbf{Y}(t))} & v_{12}\sqrt{a_2(\textbf{Y}(t))} & \cdots & v_{1m_1}\sqrt{a_{m_1}(\textbf{Y}(t))} \\
v_{21}\sqrt{a_1(\textbf{Y}(t))} & v_{22}\sqrt{a_2(\textbf{Y}(t))} & \cdots & v_{2m_1}\sqrt{a_{m_1}(\textbf{Y}(t))} \\
\vdots & \vdots & \ddots & \vdots \\
v_{N1}\sqrt{a_1(\textbf{Y}(t))} & v_{N2}\sqrt{a_2(\textbf{Y}(t))} & \cdots & v_{Nm_1}\sqrt{a_{m_1}(\textbf{Y}(t))}
\end{pmatrix}_{N\times m_1},
\]\\
\[
H_2=\begin{pmatrix}
v_{1(m_1+1)}\sqrt{a_{m_1+1}(\textbf{Y}(t-\tau_{m_1+1}))} & \cdots & v_{1m_2}\sqrt{a_{m_2}(\textbf{Y}(t-\tau_{m_2}))} \\
v_{2(m_1+1)}\sqrt{a_{m_1+1}(\textbf{Y}(t-\tau_{m_1+1}))} & \cdots & v_{2m_2}\sqrt{a_{m_2}(\textbf{Y}(t-\tau_{m_2}))} \\
\vdots & \ddots & \vdots \\
v_{N(m_1+1)}\sqrt{a_{m_1+1}(\textbf{Y}(t-\tau_{m_1+1}))} & \cdots & v_{Nm_2}\sqrt{a_{m_2}(\textbf{Y}(t-\tau_{m_2}))} \\
\end{pmatrix}_{N\times (m_2-m_1)},
\]\\
\[
H_3=\begin{pmatrix}
v_{1(m_2+1)}^r\sqrt{a_{m_2+1}(\textbf{Y}(t))} & v_{1(m_2+2)}^r\sqrt{a_{m_2+2}(\textbf{Y}(t))} & \cdots & v_{1m}^r\sqrt{a_{m}(\textbf{Y}(t))} \\
v_{2(m_2+1)}^r\sqrt{a_{m_2+1}(\textbf{Y}(t))} & v_{2(m_2+2)}^r\sqrt{a_{m_2+2}(\textbf{Y}(t))} & \cdots & v_{2m}^r\sqrt{a_{m}(\textbf{Y}(t))} \\
\vdots & \vdots & \ddots & \vdots \\
v_{N(m_2+1)}^r\sqrt{a_{m_2+1}(\textbf{Y}(t))} & v_{N(m_2+2)}^r\sqrt{a_{m_2+2}(\textbf{Y}(t))} & \cdots & v_{Nm}^r\sqrt{a_{m}(\textbf{Y}(t))} \\
\end{pmatrix}_{N\times (m-m_2)},
\]\\
\[
H_4=\begin{pmatrix}
v_{1(m_2+1)}^p\sqrt{a_{m_2+1}(\textbf{Y}(t-\tau_{m_2+1}))} & \cdots & v_{1m}^p\sqrt{a_{m}(\textbf{Y}(t-\tau_{m}))} \\
v_{2(m_2+1)}^p\sqrt{a_{m_2+1}(\textbf{Y}(t-\tau_{m_2+1}))} & \cdots & v_{2m}^p\sqrt{a_{m}(\textbf{Y}(t-\tau_{m}))} \\
\vdots & \ddots & \vdots \\
v_{N(m_2+1)}^p\sqrt{a_{m_2+1}(\textbf{Y}(t-\tau_{m_2+1}))} & \cdots & v_{Nm}^p\sqrt{a_{m}(\textbf{Y}(t-\tau_{m}))} \\
\end{pmatrix}_{N\times (m-m_2)}.
\]
In this formulation, each delayed consuming reaction is effectively split into two reactions, one describing changes in reactants, and one describing changes in products, in the same way as they are represented in the DCME (\ref{DCME1}). This then results in extending the number of independent Wiener processes that need to be included in the SDDE (\ref{SDDE model}) in a manner similar to how non-delayed and delayed non-consuming reactions are treated. This also fits with an underlying assumption of weak coupling of the system states at time $t$ and $t-\tau_{k}$, which underlies the derivation of the SDDE (\ref{SDDE model}) from the DCME in \cite{tian2007}, and one should also note that a similar approach is taken when one performs system-size expansion of the DCME \cite{galla2009}.

Tian et al. \cite{tian2007} have also considered an alternative formulation of the model in the form
\begin{align}\label{SDDE2}
d\textbf{Y}^{\ast}(t)=&{\bf f}\left(\textbf{Y}^{\ast}(t),\textbf{Y}^{\ast}(t-\tau_{m_1+1}),\dots,\textbf{Y}^{\ast}(t-\tau_{m}),t\right)dt\nonumber\\
&+G\left(\textbf{Y}^{\ast}(t),\textbf{Y}^{\ast}(t-\tau_{m_1+1}),\dots,\textbf{Y}^{\ast}(t-\tau_m),t\right)d\textbf{W}^{\ast}(t),
\end{align}
where $\textbf{Y}^{\ast}(t)=(Y_1^{\ast}(t),Y_2^{\ast}(t),\dots,Y_N^{\ast}(t))^T$, 
$\textbf{W}^{\ast}(t)=(W_1^{\ast}(t),W_2^{\ast}(t),\dots,W_N^{\ast}(t))^T$, with $W_j^{\ast}$, $1\leq j\leq N$, being independent Wiener processes, and 
$G$ being an $N\times N$ symmetric positive semidefinite matrix related to $H$ through an $N\times N$ matrix $V$, where $V=HH^T$ and $G=V^{\sfrac{1}{2}}$, which also implies $V=GG^T$. As mentioned earlier, numerical simulations of a model for gene regulatory networks using these two alternative SDDE formulations produced small differences in observed means and variances of resulting distributions, thus is was suggested that ``{\it more work is needed to compare the difference between the two types of the Langevin approach}" \cite{tian2007}. To address this problem, we will now show that the above two SDDE models are actually equivalent in the sense that their solutions have the same probability distribution, as well as the same sample path solutions.

To show that systems (\ref{SDDE model}) and (\ref{SDDE2}) are equivalent in distribution, i.e. their solutions have the same probability distribution, it suffice to show that the probability density function for both of these systems satisfies the same forward Kolmogorov or Fokker-Planck equation. This is established by the following result, which generalises earlier work in \cite{Frank,Risken} to the case of multiple time delays and multi-dimensional stochastic systems.

\begin{theorem}\label{NewThm}
	Consider the following It\^{o} SDDE model
	\begin{align*}
	dY_i(t)=&f_i\left(\textbf{Y}(t),\textbf{Y}(t-\tau_1),\dots,\textbf{Y}(t-\tau_q),t\right)dt\\
	&+\sum_{j=1}^{M}g_{ij}\left(\textbf{Y}(t),\textbf{Y}(t-\tau_1),\dots,\textbf{Y}(t-\tau_q),t\right)dW_j(t),
	\end{align*}
	where $W_j(t)$ are independent Wiener processes, and
	$$f_i:\underbrace{\mathbb{R}^N\times \mathbb{R}^N\times \dots\times \mathbb{R}^N}_\text{(q+1)-times}\times \mathbb{R}\rightarrow \mathbb{R},\quad g_{ij}:\underbrace{\mathbb{R}^N\times \mathbb{R}^N\times \dots\times \mathbb{R}^N}_\text{(q+1)-times}\times \mathbb{R}\rightarrow \mathbb{R},$$
	for every $1\leq i\leq N$ and $1\leq j\leq M$, with the initial condition $\textbf{Y}(t)=\boldsymbol{\varphi}(t)$ for $t\in [-\tau,0]$, where $\tau=\max\{\tau_1,\dots,\tau_q\}$. 
	The corresponding delay Fokker-Planck equation has the form
	\begin{align*}
	\dfrac{\partial}{\partial t}&P(\textbf{y},t\mid\boldsymbol{\varphi})=\nonumber\\
	&-\sum\limits_{i=1}^N\dfrac{\partial}{\partial y_i}\underbrace{\idotsint}_\text{q}f_i(\textbf{y},\textbf{y}_{\tau_1},\dots,\textbf{y}_{\tau_q},t)P(\textbf{y},t;\textbf{y}_{\tau_1},t-\tau_1;\dots;\textbf{y}_{\tau_q},t-\tau_q\mid\boldsymbol{\varphi})dV\nonumber\\
	&+\dfrac{1}{2}\sum\limits_{i,j}\dfrac{\partial^{2}}{\partial y_i \partial y_j}\underbrace{\idotsint}_\text{q}\left(GG^T\right)_{ij}P(\textbf{y},t;\textbf{y}_{\tau_1},t-\tau_1;\dots;\textbf{y}_{\tau_q},t-\tau_q\mid\boldsymbol{\varphi})dV,
	\end{align*}
	where $dV=d{\textbf{y}_{\tau_1}} d{\textbf{y}_{\tau_2}}\dots d{\textbf{y}_{\tau_q}}$, and $G$ is an $N\times M$ matrix with $G_{ij}=g_{ij}(\textbf{y},\textbf{y}_{\tau_1},\dots,\textbf{y}_{\tau_q},t)$, for every $1\leq i\leq N$ and $1\leq j\leq M$.
\end{theorem}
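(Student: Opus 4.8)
\section*{Proof proposal}

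The plan is to follow the classical route from an It\^{o} equation to its forward Kolmogorov equation — test the process against an arbitrary smooth observable, differentiate in time, apply It\^{o}'s formula, take expectations, and integrate by parts — while tracking carefully how the delayed arguments of the coefficients force the appearance of the \emph{joint} multi-time density rather than the one-time marginal.

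First I would fix an arbitrary compactly supported $R\in C^2(\mathbb{R}^N,\mathbb{R})$ and consider $\langle R(\textbf{Y}(t))\rangle$. Applying the multidimensional It\^{o} formula to $R(\textbf{Y}(t))$ and using the multiplication rules $dW_j\,dW_l=\delta_{jl}\,dt$ and $dt\,dW_j=0$, the cross term $dY_i\,dY_k$ collapses to $\sum_j g_{ij}g_{kj}\,dt=(GG^T)_{ik}\,dt$, giving
\[
dR(\textbf{Y}(t))=\Bigl[\sum_i f_i\,\partial_{y_i}R+\tfrac12\sum_{i,k}(GG^T)_{ik}\,\partial_{y_i}\partial_{y_k}R\Bigr]dt+\sum_{i,j}\partial_{y_i}R\,g_{ij}\,dW_j.
\]
Here $\partial_{y_i}R$ is evaluated at $\textbf{Y}(t)$, whereas $f_i$ and $g_{ij}$ carry the full argument list $(\textbf{Y}(t),\textbf{Y}(t-\tau_1),\dots,\textbf{Y}(t-\tau_q),t)$. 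Taking expectations annihilates the stochastic integral, since the $g_{ij}$ are non-anticipating and the $W_j$ are independent Wiener processes, so that term is a zero-mean martingale.

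The crux — and the step where the delay genuinely enters — is evaluating $\langle f_i\,\partial_{y_i}R\rangle$ and $\langle(GG^T)_{ik}\,\partial_{y_i}\partial_{y_k}R\rangle$. Because the coefficients depend on the past states $\textbf{Y}(t-\tau_1),\dots,\textbf{Y}(t-\tau_q)$, these expectations are \emph{not} integrals against the one-time marginal $P(\textbf{y},t\mid\boldsymbol{\varphi})$, but against the $(q+1)$-time joint density $P(\textbf{y},t;\textbf{y}_{\tau_1},t-\tau_1;\dots;\textbf{y}_{\tau_q},t-\tau_q\mid\boldsymbol{\varphi})$: the present variable $\textbf{y}$ is held fixed by $\partial_{y_i}R(\textbf{y})$ while the lagged variables are integrated out over $dV=d\textbf{y}_{\tau_1}\cdots d\textbf{y}_{\tau_q}$. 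This is exactly the $\idotsint\cdots dV$ appearing in the statement. I expect this reduction to be the main obstacle to state cleanly, as it requires the joint law of present and lagged states to exist with enough regularity, and a Fubini argument to separate the past integration from the $\textbf{y}$-integration.

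Finally I would equate the two expressions for $\tfrac{d}{dt}\langle R(\textbf{Y}(t))\rangle$: on the left, $\tfrac{d}{dt}\int R(\textbf{y})P(\textbf{y},t\mid\boldsymbol{\varphi})\,d\textbf{y}=\int R(\textbf{y})\,\partial_t P\,d\textbf{y}$; on the right, the averaged drift and diffusion terms written above. Integrating by parts in $\textbf{y}$ — once for the drift term, producing $-\partial_{y_i}$ and a single sign change, and twice for the diffusion term, producing $+\tfrac12\,\partial_{y_i}\partial_{y_k}$ with boundary terms vanishing by the compact support of $R$ — transfers all $\textbf{y}$-derivatives off $R$ and onto the bracketed integrals. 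Since $R$ is arbitrary, the integrands must coincide pointwise, which is precisely the claimed delay Fokker-Planck equation. A minor bookkeeping point I would verify is the symmetrisation of $(GG^T)_{ik}$ against the symmetric Hessian $\partial_{y_i}\partial_{y_k}R$, consistent with the symmetric double sum in the statement.
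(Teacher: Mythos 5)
Your proposal is correct, but it takes a genuinely different route from the paper. The paper's proof works at the level of the transition density: it writes the multi-time joint density as an ensemble average of Dirac delta functions, invokes the generalized Kramers--Moyal expansion (citing Frank and Risken) to express $\partial_t P(\textbf{y},t\mid\boldsymbol{\varphi})$ as an infinite series over conditional jump moments $D^{(\upsilon)}_{j_1\dots j_\upsilon}$, each integrated against the joint density over the lagged variables, and then uses the time-discrete version of the SDDE to compute $D^{(1)}_{i}=f_i$, $D^{(2)}_{ij}=\tfrac{1}{2}\sum_k g_{ik}g_{jk}$, and $D^{(\upsilon)}_{j_1\dots j_\upsilon}=0$ for all $\upsilon\geq 3$, so that the expansion truncates exactly at second order and the delay Fokker--Planck equation drops out. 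You instead use the weak (adjoint/martingale) formulation: It\^{o}'s formula applied to a compactly supported test observable, expectation to annihilate the stochastic integral, identification of the resulting averages as integrals against the $(q+1)$-time joint density, integration by parts, and arbitrariness of the test function. Both arguments hinge on exactly the same delay-specific observation --- because the coefficients carry lagged arguments, the averaged equation closes only on the joint density with the lagged variables integrated out over $dV$, never on the one-time marginal --- and both are formal at comparable places (existence and regularity of the joint density, interchange of $d/dt$ with the ensemble average). What the paper's route buys is the explicit statement that the second-order truncation is \emph{exact}, i.e.\ all higher Kramers--Moyal coefficients vanish, which ties the theorem to the master-equation formalism it generalizes; what your route buys is self-containedness, since you need only It\^{o}'s formula and integration by parts rather than the generalized Kramers--Moyal machinery. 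The one point in your argument that deserves tightening is the claim that the stochastic integral has zero mean: compact support of $R$ bounds $\partial_{y_i}R(\textbf{Y}(t))$ but not $g_{ij}$, whose lagged arguments $\textbf{Y}(t-\tau_k)$ are unconstrained, so a growth or local square-integrability hypothesis on the $g_{ij}$ is needed there; this is a technical assumption rather than a gap, and the paper's own derivation implicitly assumes no less.
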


\begin{proof}
	Let us consider the joint probability density
	\begin{equation*}
	\resizebox{1\hsize}{!}
	{
		$P(\textbf{y},t;\textbf{y}^{\prime},t^{\prime};\textbf{y}_{\tau_1},t^{\prime}-\tau_1;\dots;\textbf{y}_{\tau_q},t^{\prime}-\tau_q\mid\boldsymbol{\varphi})=\bigg \langle\delta\big(\textbf{y}-\textbf{y}(t)\big) \delta\big(\textbf{y}^{\prime}-\textbf{y}(t^{\prime})\big)\prod\limits_{k=1}^q\delta\big(\textbf{y}_{\tau_k}-\textbf{y}(t^{\prime}-\tau_k) \big)\bigg\rangle$
	}
	\end{equation*}
	for $t\geq t^{\prime}$, where $\langle\ldots\rangle$ denotes ensemble average, and $\delta(\cdot)$ is the Dirac delta function. Expressing the single time-point probability density $P(\textbf{y},t\mid\boldsymbol{\varphi})$ through the conditional probability density and utilising the generalized Kramers-Moyal expansion \cite{Frank,Risken} yields the following PDE
	\begin{equation*}
	\resizebox{1\hsize}{!}
	{
		$\dfrac{\partial}{\partial t}P(\textbf{y},t\mid\boldsymbol{\varphi})=\sum\limits_{\upsilon=1}^{\infty}\sum\limits_{j_1,j_2,\dots,j_\upsilon}\dfrac{(-\partial)^\upsilon}{\partial y_{j_1}\dots\partial y_{j_\upsilon}}\underbrace{\idotsint}_{q}D^{(\upsilon)}_{j_1\dots j_\upsilon}P(\textbf{y},t;\textbf{y}_{\tau_1},t-\tau_1;\dots;\textbf{y}_{\tau_q},t-\tau_q\mid\boldsymbol{\varphi})dV,$
	}
	\end{equation*}
	where $dV=d{\textbf{y}_{\tau_1}} d{\textbf{y}_{\tau_2}}\dots d{\textbf{y}_{\tau_q}}$, and $D^{(\upsilon)}_{j_1\dots j_\upsilon}(\cdot)$ are given by
	\begin{equation*}
	\begin{aligned}
	D^{(\upsilon)}_{j_1\dots j_\upsilon}(&\textbf{y},\textbf{y}_{\tau_1},\dots,\textbf{y}_{\tau_q},t)=\\
	&\lim\limits_{h\rightarrow 0}\dfrac{1}{h}\displaystyle\int\dfrac{\prod\limits_{k=1}^{\upsilon}(z_{j_k}-y_{j_k})}{\upsilon !}P(\textbf{z},t+h\mid\textbf{y},t;\textbf{y}_{\tau_1},t-\tau_1;\dots;\textbf{y}_{\tau_q},t-\tau_q;\boldsymbol{\varphi})d\textbf{z}.
	\end{aligned}
	\end{equation*}
	Since we are working with an It\^{o} SDDE, it is possible to reformulate the problem in the form of Langevin equation similar to the case of Markov process \cite{Frank}. By rewriting coefficients $\displaystyle{D^{(\upsilon)}_{j_1\dots j_\upsilon}}$ in the form
	\begin{equation}\label{KM coeff}
	D^{(\upsilon)}_{j_1\dots j_\upsilon}(\textbf{y},\textbf{y}_{\tau_1},\dots,\textbf{y}_{\tau_q},t)=
	\lim\limits_{h\rightarrow 0}\dfrac{1}{h}\left.\dfrac{\left\langle\prod\limits_{k=1}^{\upsilon}\big(Y_{j_k}(t+h)-Y_{j_k}(t)\big)\right\rangle}{\upsilon !}\right\vert_{\textbf{Y}(t)=\textbf{y},\textbf{Y}(t-\tau_1)=\textbf{y}_{\tau_1},\dots,\textbf{Y}(t-\tau_q)=\textbf{y}_{\tau_q}},
	\end{equation}
	one can use the time-discrete version of the SDDE model \cite{Frank,Risken} to obtain the following expressions for these coefficients
	\begin{equation*}
	\begin{aligned}
	&D^{(1)}_{i}(\textbf{y},\textbf{y}_{\tau_1},\dots,\textbf{y}_{\tau_q},t)=f_i(\textbf{y},\textbf{y}_{\tau_1},\dots,\textbf{y}_{\tau_q},t),\\
	&D^{(2)}_{ij}(\textbf{y},\textbf{y}_{\tau_1},\dots,\textbf{y}_{\tau_q},t)=\dfrac{1}{2}\sum\limits_{k=1}^m g_{ik}(\textbf{y},\textbf{y}_{\tau_1},\dots,\textbf{y}_{\tau_q},t)g_{jk}(\textbf{y},\textbf{y}_{\tau_1},\dots,\textbf{y}_{\tau_q},t),\\
	&D^{(\upsilon)}_{j_1\dots j_\upsilon}(\textbf{y},\textbf{y}_{\tau_1},\dots,\textbf{y}_{\tau_q},t)=0,\quad \mbox{for every }\upsilon\geq 3,
	\end{aligned}
	\end{equation*}
	which completes the proof.
\end{proof}

Due to the relation $V=GG^T=HH^T$, {\bf Theorem~\ref{NewThm}} implies that solutions to SDDEs (\ref{SDDE model}) and (\ref{SDDE2}) do indeed have the same probability distribution. We now use the method presented in Allen et al. \cite{allen2008} for non-delayed SDEs to show that sample paths obtained as solutions of one of these SDDEs are also sample paths of the other SDDE.

\begin{theorem}\label{NewThm2}
Consider the two following It\^{o} SDDE systems
\begin{align}\label{SDDE1}
d\textbf{Y}(t)=&\textbf{f}\left(\textbf{Y}(t),\textbf{Y}(t-\tau_1),\dots,\textbf{Y}(t-\tau_q),t\right)dt\nonumber\\
&+G\left(\textbf{Y}(t),\textbf{Y}(t-\tau_1),\dots,\textbf{Y}(t-\tau_q),t\right)d\textbf{W}(t),
\end{align}
and
\begin{align}\label{SDDE3}
d\textbf{Y}^{\ast}(t)=&\textbf{f}\left(\textbf{Y}^{\ast}(t),\textbf{Y}^{\ast}(t-\tau_1),\dots,\textbf{X}^{\ast}(t-\tau_q),t\right)dt\nonumber\\
&+B\left(\textbf{Y}^{\ast}(t),\textbf{Y}^{\ast}(t-\tau_1),\dots,\textbf{Y}^{\ast}(t-\tau_q),t\right)d\textbf{W}^{\ast}(t),
\end{align}
where $\textbf{Y}(t)=(Y_1(t),Y_2(t),\dots,Y_n(t))^T$, $\textbf{Y}^{\ast}(t)=(Y_1^{\ast}(t),Y_2^{\ast}(t),\dots,Y_n^{\ast}(t))^T$, $\textbf{W}(t)=(W_1(t),\allowbreak W_2(t),\dots,W_m(t))^T$, $\textbf{W}^{\ast}(t)=(W_1^{\ast}(t),W_2^{\ast}(t),\dots,W_n^{\ast}(t))^T$, and matrices $G$ and $B$ are related through the $n\times n$ matrix $V$, where $V=GG^T$ and $B=V^{\sfrac{1}{2}}$. Notice that $V$ and $B$ are symmetric positive semidefinite matrices and $V=BB^T$. Then SDDE systems (\ref{SDDE1}) and (\ref{SDDE3}) have the same sample paths.
\end{theorem}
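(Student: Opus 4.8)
The plan is to show that, given a solution $\textbf{Y}(t)$ of the $m$-dimensional-noise system (\ref{SDDE1}), one can construct an $n$-dimensional Wiener process $\textbf{W}^{\ast}(t)$ such that the \emph{same} sample path $\textbf{Y}(t)$ also satisfies (\ref{SDDE3}) with the reduced noise matrix $B=V^{\sfrac{1}{2}}$, and conversely. Since $\textbf{f}$ is identical in both systems, the drift contributions agree automatically, so the entire argument reduces to matching the stochastic integrals: I must exhibit $\textbf{W}^{\ast}$ with $B\,d\textbf{W}^{\ast}(t)=G\,d\textbf{W}(t)$ as a pathwise identity. The key algebraic fact I would invoke is the relation $V=GG^T=BB^T$ with $B=V^{\sfrac{1}{2}}$ symmetric positive semidefinite, which ensures that the two matrices produce the same infinitesimal covariance $V\,dt$ for the noise increments.

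First I would treat the generic (nondegenerate) case in which $V$ is invertible, since then $B=V^{\sfrac{1}{2}}$ is invertible and the construction is cleanest. Define the candidate process through its stochastic differential,
\begin{equation*}
d\textbf{W}^{\ast}(t)=B^{-1}G\,d\textbf{W}(t),\qquad \textbf{W}^{\ast}(0)=\textbf{0}.
\end{equation*}
I would then verify that $\textbf{W}^{\ast}$ is genuinely an $n$-dimensional standard Wiener process. It is a continuous martingale starting at the origin, being a linear stochastic integral against the Wiener process $\textbf{W}$, so by L\'evy's characterization it suffices to check that its quadratic covariation matrix equals $t\,I_n$. Using $d\langle W_k,W_\ell\rangle=\delta_{k\ell}\,dt$ for the independent components of $\textbf{W}$, the cross-variation of $\textbf{W}^{\ast}$ is
\begin{equation*}
d\big\langle W_i^{\ast},W_j^{\ast}\big\rangle=\big(B^{-1}G\,(B^{-1}G)^T\big)_{ij}\,dt=\big(B^{-1}GG^TB^{-T}\big)_{ij}\,dt=\big(B^{-1}V\,B^{-1}\big)_{ij}\,dt,
\end{equation*}
and since $V=BB^T=B^2$ (as $B$ is symmetric) this collapses to $(B^{-1}B^2B^{-1})_{ij}\,dt=\delta_{ij}\,dt$, confirming $\textbf{W}^{\ast}$ is standard Brownian motion. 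Substituting $d\textbf{W}^{\ast}=B^{-1}G\,d\textbf{W}$ back gives $B\,d\textbf{W}^{\ast}=BB^{-1}G\,d\textbf{W}=G\,d\textbf{W}$, so $\textbf{Y}(t)$ solves (\ref{SDDE3}) driven by $\textbf{W}^{\ast}$. The same path-level identity, read in reverse via $d\textbf{W}=G^{\!+}B\,d\textbf{W}^{\ast}$ with the appropriate pseudoinverse, yields the converse inclusion, giving equality of the sample-path sets.

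The main obstacle is the degenerate case, where $V$ (and hence $B$) is singular, which is exactly the situation of interest here because $B=V^{\sfrac{1}{2}}$ is typically rank-deficient whenever $m>n$ or the columns of $G$ fail to span $\mathbb{R}^n$. Then $B^{-1}$ does not exist and I must replace it by the Moore--Penrose pseudoinverse $B^{+}$, setting $d\textbf{W}^{\ast}=B^{+}G\,d\textbf{W}+(I_n-B^{+}B)\,d\textbf{W}'$, where $\textbf{W}'$ is an auxiliary independent Wiener process supplying the missing ``directions'' in $\ker B$; the extra term is annihilated on multiplication by $B$ (since $B(I_n-B^{+}B)=0$), so it does not disturb $B\,d\textbf{W}^{\ast}=G\,d\textbf{W}$, while it is precisely what is needed to restore the full identity covariance $d\langle\textbf{W}^{\ast}\rangle=I_n\,dt$ on the null space. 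Verifying the covariance computation in this rank-deficient setting — in particular that $B^{+}GG^T(B^{+})^T=B^{+}V B^{+}$ acts as the identity on $\mathrm{range}(B)$ and vanishes on $\ker B$, complementing the $(I_n-B^{+}B)$ projection — is the delicate bookkeeping step, and I would lean on the defining pseudoinverse identities $BB^{+}B=B$ and $(BB^{+})^T=BB^{+}$ together with $V=B^2$ to push it through. Throughout I would assume enough regularity on $\textbf{f}$, $G$ and the initial history $\boldsymbol{\varphi}$ (e.g. measurability and local Lipschitz/linear-growth conditions) to guarantee existence and uniqueness of the solution on each delay interval $[0,\tau],[\tau,2\tau],\dots$, so that the construction can be propagated forward by the usual method of steps.
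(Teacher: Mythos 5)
Your construction is correct and is essentially the paper's own proof in different clothing: the paper takes the singular value decomposition $G(t)=P(t)C(t)Q(t)$ and defines $\textbf{W}^{\ast}(t)=\int_0^t P(s)\big[\big(C(s)C(s)^T\big)^{1/2}\big]^{+}C(s)Q(s)\,d\textbf{W}(s)+\int_0^t P(s)\,d\textbf{W}^{\ast\ast}(s)$, and since $B^{+}=P\big[\big(CC^T\big)^{1/2}\big]^{+}P^T$, the first integrand is exactly your $B^{+}G$, while $P\,d\textbf{W}^{\ast\ast}$ is exactly your auxiliary kernel-direction noise $(I_n-B^{+}B)\,d\textbf{W}'$. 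The one fact you should state explicitly to close the degenerate case is $\mathrm{range}(G)=\mathrm{range}(GG^T)=\mathrm{range}(B^2)=\mathrm{range}(B)$, which is what justifies $BB^{+}G=G$; with that noted, your argument (including the L\'evy-characterization step, which is actually more rigorous than the paper's verification that $\mathbb{E}\big[\textbf{W}^{\ast}(\textbf{W}^{\ast})^T\big]=tI_n$) goes through.
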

\begin{proof}
We need to show that if a given Wiener trajectory  $\textbf{W}(t)$ results in the sample path solution $\textbf{Y}(t)$ to (\ref{SDDE1}), there exists a Wiener trajectory $\textbf{W}^{\ast}(t)$ with the same sample path solution $\textbf{Y}^{\ast}(t)=\textbf{Y}(t)$ of (\ref{SDDE3}), and vice versa. Let us
assume that for a given Wiener trajectory $\textbf{W}(t)$ for $0\leq t\leq T$, SDDE (\ref{SDDE1}) has the sample path solution $\textbf{X}(t)$. Consider the following singular value decomposition of the matrix $G$:
\begin{equation*}
G(\textbf{Y}(t),\textbf{Y}(t-\tau_1),\dots,\textbf{Y}(t-\tau_q), t)=G(t)=P(t)C(t)Q(t),
\end{equation*}
for $0\leq t\leq T$, where $P(t)$ and $Q(t)$ are $n\times n$ and $m\times m$ orthogonal matrices, and $C(t)$ is a $n\times m$ matrix with $p\leq n$ positive diagonal entries. In light of orthogonality of matrices $P(t)$ and $Q(t)$, we have
$$V(t)=G(t)G(t)^T=P(t)C(t)C(t)^TP(t)^T=[B(t)]^2,$$
with
\begin{equation}\label{Beq}
B(t)=P(t)\big(C(t)C(t)^T\big)^{\sfrac{1}{2}}P(t)^T.
\end{equation}
One can then define the Wiener trajectory $\textbf{W}^{\ast}(t)$ for $0\leq t\leq T$ as
\begin{equation*}
\textbf{W}^{\ast}(t)=\displaystyle\int\limits_0^tP(s)\Big[\big(C(s)C(s)^T\big)^{\sfrac{1}{2}}\Big]^+C(s)Q(s)d\textbf{W}(s)+\displaystyle\int\limits_0^tP(s)d\textbf{W}^{\ast\ast}(s),
\end{equation*}
where $\textbf{W}^{\ast\ast}(s)$ is a vector of length $n$, whose first $p$ entries are equal to zero, and the remaining $n-p$ entries are independent Wiener processes, and $(\cdot)^+$ denotes the pseudo-inverse of a matrix \cite{watkins,ortega}. It follows that $\mathbb{E}(\textbf{W}^{\ast}(t)(\textbf{W}^{\ast}(t))^T)=tI_n$, where $I_n$ is the $n\times n$ identity matrix, thus confirming that $\textbf{W}^{\ast}(t)$ is indeed a vector of $n$ independent Wiener processes. Substituting $\textbf{Y}(t)$ instead of $\textbf{Y}^{\ast}(t)$ into the diffusion term of (\ref{SDDE3}) gives
\[
B(\textbf{Y}(t),\textbf{Y}(t-\tau_1),\dots,\textbf{Y}(t-\tau_q),t)d\textbf{W}^{\ast}(t)=B(t)\bigg[P(t)\Big[\big(C(t)C(t)^T\big)^{\sfrac{1}{2}}\Big]^+C(t)Q(t)d\textbf{W}(t)+P(t)d\textbf{W}^{\ast\ast}(t)\bigg],
\]
and using an expression for $B(t)$ from (\ref{Beq}), we obtain 
\[
B(\textbf{Y}(t),\textbf{Y}(t-\tau_1),\dots,\textbf{Y}(t-\tau_q),t)d\textbf{W}^{\ast}(t)=G\left(\textbf{Y}(t),\textbf{Y}(t-\tau_1),\dots,\textbf{Y}(t-\tau_q),t\right)d\textbf{W}(t),
\]
which proves that $\textbf{Y}(t)$ is a sample path solution of (\ref{SDDE3}).

Conversely, assume that a Wiener trajectory $\textbf{W}^{\ast}(t)$ for $0\leq t\leq T$ with the sample path solution $\textbf{Y}^{\ast}(t)$ to (\ref{SDDE3}) is given. Consider the following singular value decomposition
\begin{equation*}
B(\textbf{Y}^{\ast}(t),\textbf{Y}^{\ast}(t-\tau_1),\dots,\textbf{Y}^{\ast}(t-\tau_q), t)=B(t)=P(t)C(t)Q(t),
\end{equation*}
for $0\leq t\leq T$, where $P(t)$ and $Q(t)$ are $n\times n$ and $m\times m$ orthogonal matrices, respectively, with $C(t)$ being again an $n\times m$ matrix with $p\leq n$ positive diagonal entries. We can define the Wiener trajectory $\textbf{W}(t)$ for $0\leq t\leq T$ as
\begin{equation}\label{Wdef}
\textbf{W}(t)=\displaystyle\int\limits_0^tQ(s)^TC(s)^+\big[C(s)C(s)^T\big]^{\sfrac{1}{2}}P(s)^Td\textbf{W}^{\ast}(s)+\displaystyle\int\limits_0^tQ(s)^Td\textbf{W}^{\ast\ast\ast}(s),
\end{equation}
where $\textbf{W}^{\ast\ast\ast}(s)$ is a vector of length $m$, having zeros as the first $p$ entries, and the next $m-p$ entries being independent Wiener processes. Proceeding in the same way as above, we can show that the solution $\textbf{X}^{\ast}(t)$ of SDDE (\ref{SDDE3}) that corresponds to the Weiner trajectory $\textbf{W}^{\ast}(t)$ is also a solution of the SDDE (\ref{SDDE1}) corresponding to the Weiner trajectory $\textbf{W}(t)$ given in (\ref{Wdef}). Therefore, solutions to SDDE systems (\ref{SDDE1}) and (\ref{SDDE3}) have the same sample paths.
\end{proof}

Taken together, {\bf Theorems~\ref{NewThm}} and {\bf\ref{NewThm2}} show that any SDDE of the form
\begin{equation}\label{gen_model}
d\textbf{Y}=\textbf{f}\left(\textbf{Y}(t),\textbf{Y}(t-\tau_{m_1+1}),\dots,\textbf{Y}(t-\tau_m)\right)dt+Qd\textbf{W}(t),
\end{equation}
is equivalent to model (\ref{SDDE model}), as long as $QQ^T=V$ ($=HH^T$). This includes as a particular case system (\ref{SDDE2}) having a square matrix $Q$, but this does not necessarily have to be the case, provide the condition $QQ^T=HH^T$ is satisfied. The importance of this result is that since normally there is a large number of reactions involved, by allowing one to replace an $N\times (2m-m_2)$ matrix $H$ by a matrix with possibly much fewer columns, this can significantly reduce computational complexity of the resulting SDDE model.

\renewcommand{\arraystretch}{1.2}
\begin{table}
	\centering
	\caption{State changes $\Delta\textbf{Y}$ in a small time interval $\Delta t$}
	\begin{tabular}{|l|l|l|}
		\hline 
		$i$      & $(\Delta\textbf{Y})_i$  & Probability $P_i\Delta t$                       \\ \hline
		1        & $\textbf{v}_1$          & $a_1(\textbf{Y}(t))\Delta t$                    \\ \hline
		\vdots   & \vdots                  & \vdots                                          \\ \hline
		$m_1$    & $\textbf{v}_{m_1}$      & $a_{m_1}(\textbf{Y}(t))\Delta t$                \\ \hline
		$m_1+1$  & $\textbf{v}_{m_1+1}$    & $a_{m_1+1}(\textbf{Y}(t-\tau_{m_1+1}))\Delta t$ \\ \hline
		\vdots   & \vdots                  & \vdots                                          \\ \hline
		$m_2$    & $\textbf{v}_{m_2}$      & $a_{m_2}(\textbf{Y}(t-\tau_{m_2}))\Delta t$     \\ \hline
		$m_2+1$  & $\textbf{v}_{m_2+1}^r$  & $a_{m_2+1}(\textbf{Y})\Delta t$              \\ \hline
		\vdots   & \vdots                  & \vdots                                          \\ \hline
		$m$      & $\textbf{v}_{m}^r$      & $a_{m}(\textbf{Y})\Delta t$                  \\ \hline
		$m+1$    & $\textbf{v}_{m_2+1}^p$  & $a_{m_2+1}(\textbf{Y}(t-\tau_{m_2+1}))\Delta t$ \\ \hline
		\vdots   & \vdots                  & \vdots                                          \\ \hline
		$2m-m_2$ & $\textbf{v}_{m}^p$      & $a_{m}(\textbf{Y}(t-\tau_{m}))\Delta t$         \\ \hline
		$2m-m_2+1$ & \textbf{0}            & $1-\sum\limits_{i=1}^{2m-m_2} P_i\Delta t$      \\ \hline
	\end{tabular}
	\label{possible changes1}
\end{table}
\renewcommand{\arraystretch}{1}

\section{Algorithm for deriving an SDDE}

Having established the equivalence of systems (\ref{SDDE model}) and (\ref{gen_model}), let us present an alternative approach for finding the function $\textbf{f}$ and the matrix $Q$, which extends the method presented earlier in \cite{allen2007b,Fatehi2018a} for systems without time delays. Let us recall that $\textbf{Y}(t)=(Y_1(t),Y_2(t),\ldots,Y_N(t))^T$ is a vector of continuous random variables representing the amounts of molecular species at time $t$, with the first $m_1$ reactions being non-delayed, reactions $m_1+1$ to $m_2$ being delayed non-consuming reactions with corresponding time delays $\tau_{m_1+1},\ldots,\tau_{m_2}$, and the rest to be consuming delayed reactions with time delays $\tau_{m_2+1},\ldots,\tau_{m}$. We assume that $\Delta t$ is small enough, so that during this time interval at most one change can occur in state variables as represented by the state change vectors, and if it is a consuming delay reaction, then we split its state change vector into two vectors in a similar way to how it was done for the DCME (\ref{DCME1}), namely, with one state change vector representing changes in reactants, and the second one representing changes in products. These state changes together with corresponding probabilities are all listed in Table~\ref{possible changes1}. Using this table of possible state changes, one can compute the expectation and covariance matrix of the state change $\Delta\textbf{Y}$ for sufficiently small $\Delta t$.

The expectation vector to order $\Delta t$ is given by
\begin{equation*}
\mathbb{E}(\Delta \textbf{Y})\approx\sum\limits_{i=1}^{2m-m_2}P_i(\Delta\textbf{Y})_i\Delta t=\boldsymbol{\mu}\Delta t,
\end{equation*}
and the covariance matrix is obtained by only keeping terms of order $\Delta t$, i.e.
\begin{align*}
\mbox{cov}(\Delta\textbf{Y})&=\mathbb{E}\left[(\Delta \textbf{Y})(\Delta \textbf{Y})^T\right]-\mathbb{E}\left[\Delta \textbf{Y}\right](\mathbb{E}\left[\Delta \textbf{Y}\right])^T\approx \mathbb{E}\left[(\Delta \textbf{Y})(\Delta \textbf{Y})^T\right]\\
&=\sum\limits_{i=1}^{2m-m_2}P_i(\Delta \textbf{Y})_i(\Delta \textbf{Y}_i)^T\Delta t=\Sigma \Delta t.
\end{align*}
It can be easily shown that for the matrix $H$ in equation (\ref{SDDE model}) $HH^T=\Sigma$, and thus, matrix $\Sigma$ found using Table~\ref{possible changes1} is the same as matrix $V$.

In summary, to derive an SDDE model for a model with delays $\{\tau_1,\tau_2,\allowbreak\ldots,\tau_q\}$, first we have to compile the table with all possible state changes, explicitly separating consuming reactions. Then we use this table to find the drift vector $\boldsymbol{\mu}$ and covariance (diffusion) matrix $\Sigma$, from which we find the matrix $Q$ satisfying $QQ^T=\Sigma$. The resulting It\^{o} SDDE model then has the form
\begin{equation}\label{k}
\begin{cases}
d\textbf{Y}(t)=\boldsymbol{\mu}dt+Qd\textbf{W}(t), \\
\textbf{Y}(t)=\boldsymbol{\varphi}(t)\quad \mbox{for}\quad t\in [-\tau,0],
\end{cases}
\end{equation}
where $\tau=\max\{\tau_1,\dots,\tau_q\}$, and $\textbf{W}(t)$ is a vector of independent Wiener processes. One should note that the order of entries in the table of state changes is irrelevant, since all entries come with their respective probabilities. Moreover, if any two (or more) entries have the same  state change vectors, these entries can be combined into one, with the associated probability being the sum of individual probabilities of those entries. This would reduce the size of the tables of state changes, but would not affect the drift vector or the diffusion matrix.\\

\noindent {\bf Remark 1.} {\it In order for SDDE model (\ref{SDDE model}), which represents a delayed chemical Langevin equation (CLE), to provide a good approximation of the original DCME (\ref{DCME1}), certain assumptions have to be satisfied. The first of these is the so-called {\it leap condition} \cite{gillespie2000}, which states that there exists some $\Delta t>0$, such that propensities for all reactions $a_j(\bf X)$ remain constant on time interval $[t,t+\Delta t)$. This then implies that the number of reactions $R_j$ that occur in the interval $[t,t+\Delta t)$ obeys a Poisson distribution with parameter $a_j(\bf x)\Delta t$, where ${\bf X}(t)={\bf x}$. Under additional assumption that $\Delta t$ is not only small enough to satisfy the leap condition, but also large enough to satisfy $a_j(\bf x)\Delta t\gg 1$, one can approximate each Poisson random variable for a normal random variable with the same mean and variance, $\displaystyle{P_j(a_j(\bf x)\Delta t)\approx \mathcal{N}_j\left(a_j(\bf x)\Delta t,a_j(\bf x)\Delta t\right)=a_j(\bf x)\Delta t+\mathcal{N}_j(0,1)}$ \cite{gillespie2000}. Both of these conditions are satisfied when the numbers of species involved are large \cite{ilie09,grima11}, but this is a sufficient condition, and Grima et al. \cite{grima11} have shown that in certain regimes even for relatively small numbers of species, CLE can still provide a good approximation of the CME. An alternative derivation of the CLE can be found in M\'elyk\'uti et al. \cite{mely10}, where it was shown that CLEs form a parametric family of equivalent equations. In the case of delayed CLE, there is an additional assumption $P({\bf X}-{\bf v}_j,t; {\bf X}_i,t-\tau_j)\approx P({\bf X}-{\bf v}_j,t)\times P({\bf X}_i,t-\tau_j)$, which effectively means that the time delays are sufficiently large to ensure that a larger number of reactions occur during a time interval $[t-\tau_j,t]$, so that the coupling of system states at time $t-\tau_j$ and $t$ is weak \cite{tian2007}.}

\section{Examples}

To illustrate how the methodology developed in the previous section can be used for deriving and simulating stochastic models with consuming and non-consuming delayed reactions, below we consider two specific examples, where the mean-field deterministic analogues are characterised either by a single stable steady state, or by a bi-stability between two stable steady states.

\subsection{Example 1}

Let us consider a system of two molecular species, whose state at time $t$ is described by the vector $\textbf{X}(t)=(X(t),Y(t))$, which interact through the following set of reactions
\begin{equation}
\begin{array}{l}
\displaystyle{R_1: \varnothing \xrightarrow{b} X,\quad R_2: X\xrightarrow{d}\varnothing, \quad R_3: Y\xrightarrow{d}\varnothing,\quad
R_4:X+Y\xdashrightarrow{a,\tau_4}2Y,\quad R_5: Y\xdashrightarrow{c,\tau_5} X,}
\end{array}
\end{equation} 
where instantaneous reactions are indicated with solid arrows, and time-delayed reactions are shown with dashed arrows, with all reaction rates shown above the corresponding arrows. This system has three non-delayed reactions $R_1$ to $R_3$, reaction $R_4$ is a non-consuming delayed reaction, and reaction $R_5$ is a consuming delayed reaction. Using the notation from the previous section, we introduce $m_1=3$, $m_2=4$, and $m=5$ as an overall number of reactions.

Using the law of mass action, we obtain a system of differential equations describing deterministic evolution of mean-field concentrations of species $X$ and $Y$
\begin{equation}\label{DDE model}
	\begin{array}{l}
		\displaystyle{\dfrac{dX}{dt}=b-dX(t)-{a}X(t-\tau_4)Y(t-\tau_4)+cY(t-\tau_5),}\\\\
		\displaystyle{\dfrac{dY}{dt}={a}X(t-\tau_4)Y(t-\tau_4)-cY(t)-dY(t).}
	\end{array}
\end{equation}
This model can have up to two steady states: $E_1=(b/d,0)$ and $E_2=((c+d)/a,(ab-d(c+d))/ad)$. $E_1$ is stable for any $\tau_4$ and $\tau_5$ if $ab<d(c+d)$, and unstable for any time delays if $ab>d(c+d)$, in which case the second steady state is feasible, i.e. both of its components are positive.

To derive an SDDE model, we consider $\textbf{Y}(t)=(Y_1(t), Y_2(t))$ to be a vector of continuous random variables describing the amounts of species $X$ and $Y$ and time $t$. Following the method described in the previous section, we conclude that there are $2m-m_2=6$ state changes that have to be included. Under assumption of $\Delta t$ being sufficiently small to ensure that during this time interval at most one change can occur in state variables, these state changes together with their probabilities are shown in Table~\ref{possible changes}. 
\renewcommand{\arraystretch}{1.6}
\begin{table}[H]
	\centering
	\caption{Possible state changes $\Delta\textbf{Y}$ during a small time interval $\Delta t$.}
	\begin{tabular}{|l|l|l|}
		\hline
		$i$& $(\Delta\textbf{Y})_i^T$ & Probability $P_i\Delta t$ \\ \hline
		1  & $(1,0)$  & $b\Delta t$ \\ \hline
		2  & $(-1,0)$  & $dY_1(t)\Delta t$ \\ \hline
		3  & $(0,-1)$ & $dY_2(t)\Delta t$ \\ \hline
		4  & $(-1,1)$ & $aY_1(t-\tau_4)Y_2(t-\tau_4)\Delta t$ \\ \hline
		5  & $(0,-1)$ & $cY_2(t)\Delta t$ \\ \hline
		6  & $(1,0)$ & $cY_2(t-\tau_5)\Delta t$ \\ \hline
		7  & $(0,0)$  & $1-\sum\limits_{i=1}^{6} P_i\Delta t$ \\ \hline
	\end{tabular}
	\label{possible changes}
\end{table}
\noindent Using Table~\ref{possible changes}, the expectation vector $\mathbb{E}(\Delta \textbf{Y})$ and covariance matrix $\mbox{cov}(\Delta\textbf{Y})$ to order $\Delta t$ can be found as
\[
\displaystyle{\mathbb{E}(\Delta \textbf{Y})\approx\sum\limits_{i=1}^{6}P_i(\Delta\textbf{Y})_i\Delta t=\boldsymbol{\mu}\Delta t\quad 
\Longrightarrow\quad \boldsymbol{\mu}=\begin{pmatrix}
P_1-P_2-P_4+P_6 \\
P_4-P_3-P_5 
\end{pmatrix},}
\]
and
\[
\displaystyle{\mbox{cov}(\Delta\textbf{Y})\approx\sum\limits_{i=1}^{6}P_i(\Delta \textbf{Y})_i(\Delta \textbf{Y}_i)^T\Delta t=\Sigma \Delta t\quad 
\Longrightarrow\quad \Sigma=\begin{pmatrix}
	P_1+P_2+P_4+P_6 & -P_4 \\
	-P_4 & P_3+P_4+P_5 
	\end{pmatrix}.}
\]
\begin{figure}
	\centering
	\includegraphics[width=1\linewidth]{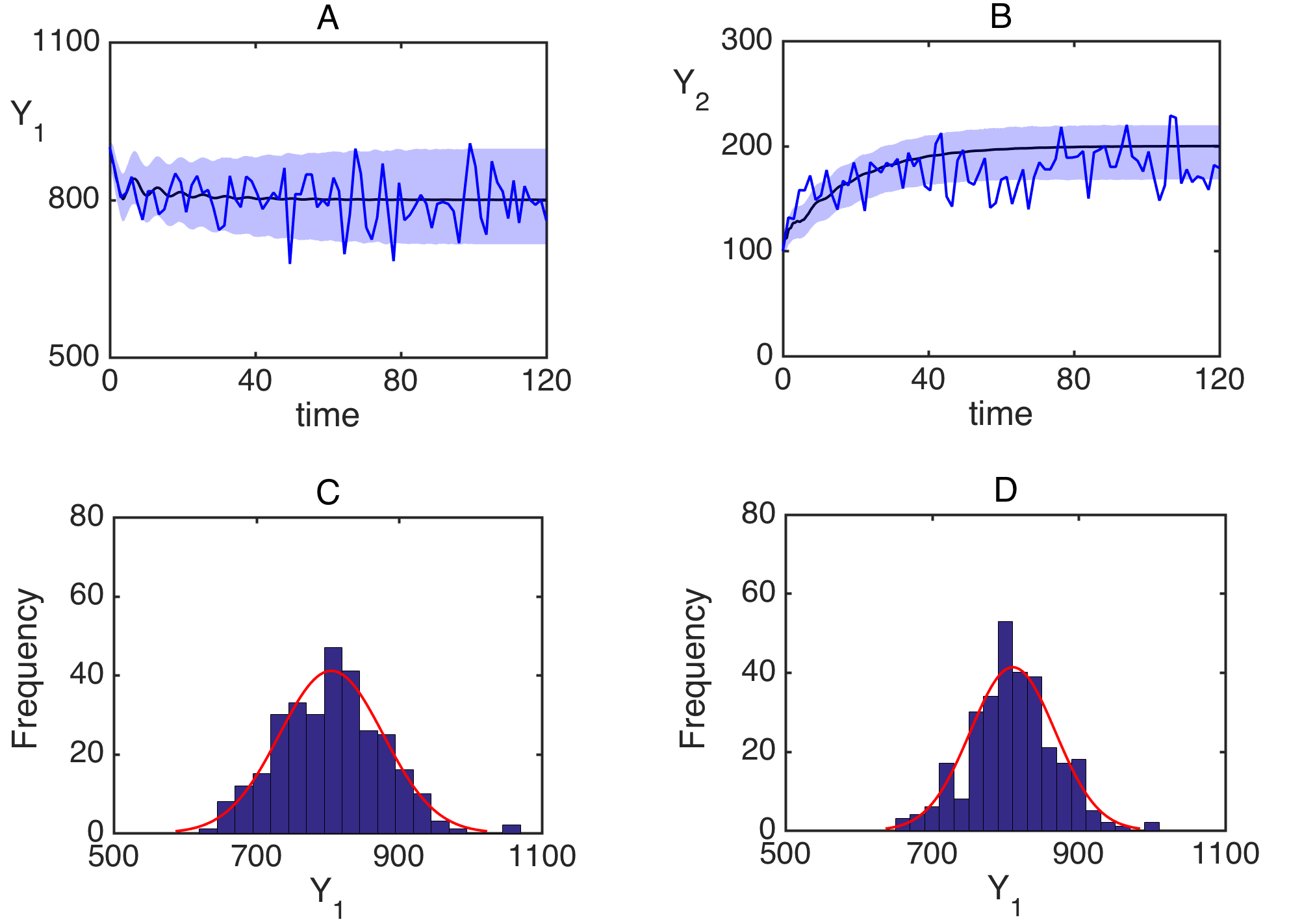}
	\caption{\small (a), (b) Numerical simulation of the SDDE (\ref{SDDE}), where shaded blue region indicates an area of one standard deviation from the mean of 10000 simulations. Blue curve shows one stochastic realisation of the model (\ref{SDDE}), black curve is the solution of the deterministic model (\ref{DDE model}). (c) and (d) show frequency distributions at $t=120$ of values for the variable $Y_1$ using SDDE and DSSA with 300 simulations, respectively, together with a fit to a normal distribution shown in red. Parameter values are $a=0.005$, $b=1000$, $c=1$, $d=1$, $\tau_4=1$, $\tau_5=3.5$.}
	\label{meanstd}
\end{figure}
If we now define the matrix $Q$ as follows,
\[
Q=\begin{pmatrix}
\sqrt{P_1+P_2+P_6} & -\sqrt{P_4} & 0 \\
0 & \sqrt{P_4} & \sqrt{P_3+P_5}
\end{pmatrix},
\]
then the $2\times 3$ matrix $Q$ satisfies $QQ^T=\Sigma$, and the It\^{o} SDDE model thus has the form
\begin{equation}\label{SDDE}
\begin{cases}
d\textbf{Y}(t)=\boldsymbol{\mu}dt+Qd\textbf{W}(t), \\
\textbf{Y}(t)=\boldsymbol{\varphi}(t)\quad \mbox{for}\quad t\in [-\tau,0],
\end{cases}
\end{equation}
where $\textbf{W}(t)=(W_1(t),W_2(t),W_3(t))^T$ is a vector of three independent Wiener processes, $\tau=\max\{\tau_4,\tau_5\}$, and $\boldsymbol{\varphi}(t)$ is the vector of initial conditions. It is noteworthy that the matrix $Q$ is only $2\times 3$, and not $2\times 6$ as it would be in the original SDDE formulation (\ref{SDDE model}), thus reducing the number of independent Wiener processes required for computation by half.

To solve the model (\ref{SDDE}) numerically, we use the strong predictor-corrector method with the degree of implicitness in the drift coefficient chosen to be equal to $1/7$, since for this value the method has the largest stability region \cite{niu2014,niu2015}. We choose the values of parameters in such a way that the steady state $E_2$ is feasible and deterministically stable. The initial condition is taken to be
\begin{equation}\label{IC}
\left(Y_1(s),Y_2(s)\right)=(900,100), \quad s\in[-\tau_{max},0],\quad \tau_{max}=\max\{\tau_4,\tau_5\}.
\end{equation}
Figure~\ref{meanstd} shows the results of numerical solution of the model (\ref{SDDE}) with initial conditions (\ref{IC}) for 10,000 realisations. Since deterministically the steady state $E_2$ is stable (and the system is in its basin of attraction), solution of the deterministic model (\ref{DDE model}) approaches this steady state, while initially exhibiting some decaying oscillations associated with characteristic eigenvalues of $E_2$ being complex and having a small negative real part. Stochastically, the mean is very close to the deterministic trajectory, because it obeys the same deterministic system of equations \cite{Allen2010}, and we also observe that as time progresses, the variance of stochastic solutions settles on some steady level. One can also notice that even though averaged dynamics mimic the behaviour of the deterministic model, individual stochastic realisations exhibit sustained oscillations, a phenomenon known as coherence resonance or {\it stochastic amplification} \cite{alonso2007,Kuske2007}.

Although it is known that SDDEs only provide an approximation of the true stochastic dynamics, in Figs.~\ref{meanstd}(c) and (d) we have compared the distribution of values for one of the species obtained as a solution of the SDDE model (\ref{SDDE}) with an equivalent distribution obtained using an exact DSSA proposed by Cai \cite{Cai2007} and implemented in the StochPy package in Python \cite{stochpy13}. One observes a good agreement between the two distributions, providing additional support for using SDDEs as an effective tool for stochastic simulations of systems with consuming and non-consuming delays. Importantly, with both SDDE and DSSA codes being implemented in Python and run on the same laptop with 2.6GHz i7-3720 processor, one run of the SDDE model only took on average 0.1 sec, while one run of the DSSA took on average 55.2 sec, suggesting a huge improvement in terms of speed of performance, without compromising accuracy in terms of resulting distribution.

\subsection{Example 2}

It has been extensively discussed in the context of various biological and chemical models that negative feedback is required for systems to exhibit oscillations, while positive feedback is needed for multi-stability, see, e.g. \cite{fer02,krishna09,wilh09} and references therein. As our second example, we consider a model suggested by Wilhelm \cite {wilh09} with positive and negative feedback, which arguably represents the smallest bistable chemical reaction system in terms of having the smallest numbers of reactants, reactions, and associated ODEs representing chemical kinetics. This model consists of two species $X$ and $Y$ that interact as shown in the diagram below
\begin{figure}[H]
	\centering
	\includegraphics[width=0.3\linewidth]{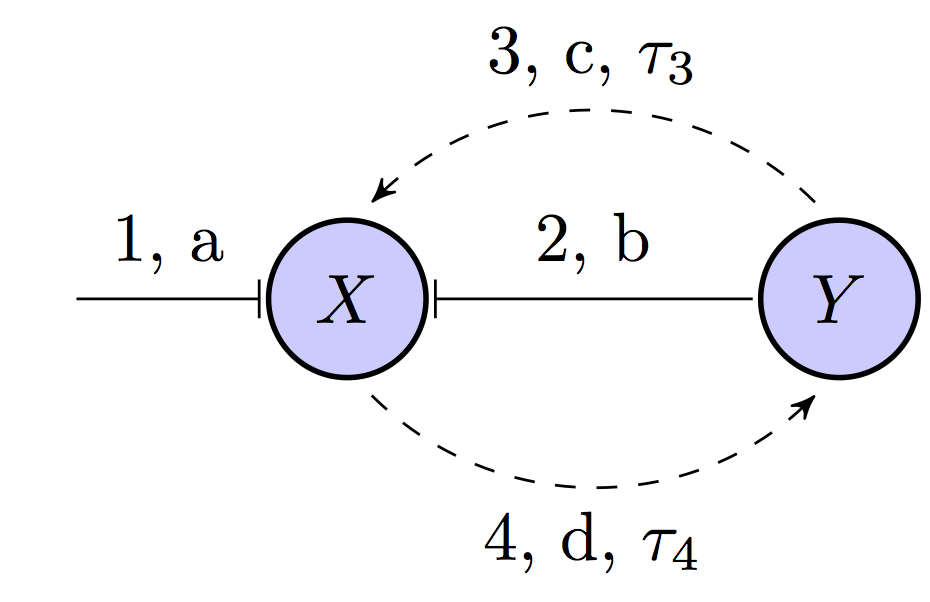}
\end{figure}
\noindent which corresponds to the following systems of reactions
\begin{equation}\label{ex2_syst}
\begin{array}{l}
\displaystyle{R_1: X \xrightarrow{a} \varnothing,\quad R_2: X+Y\xrightarrow{b} Y, \quad R_3: Y\xdashrightarrow{c,\tau_3} 2X,\quad
R_4:2X\xdashrightarrow{d,\tau_4}Y,}
\end{array}
\end{equation} 
where, as before, solid arrows represent instantaneous reactions, and dashed lines represent time-delayed reactions. We also assume that reactions $R_1$ and $R_2$ are non-delayed, reaction $R_3$ is a non-consuming delayed reaction, and $R_4$ is a consuming delayed reaction. Using the notation from the pervious section, this gives $m_1=2$, $m_2=3$, and $m=4$.

Applying the law of mass, one obtains a system of two ODEs describing the dynamics of mean-field concentrations of chemical species $X$ and $Y$:
\begin{equation}\label{ex2_det}
\begin{array}{l}
\displaystyle{\frac{dX}{dt}=2cY(t-\tau_3)-dX(t)^2-bX(t)Y(t)-aX(t),}\\\\
\displaystyle{\frac{dY}{dt}=dX(t-\tau_4)^2-cY(t-\tau_3).}
\end{array}
\end{equation}
For any values of parameters, this system has a trivial steady state $(X_0,Y_0)=(0,0)$, and provided $cd>4ab$, it also has a pair of additional steady states
\[
\displaystyle{(X^*,Y^*)=\frac{c}{2b}\left(1\pm\sqrt{1-4\frac{ab}{cd}}\right).}
\]

In order to derive an SDDE representation of the model, we introduce a vector ${\bf Y}=(Y_1(t),Y_2(t))$ of continuous random variables, whose components represent the amounts of chemical species $X$ and $Y$ at any given time $t$. In this case, there are $2m-m_2=5$ different transitions to consider during any infinitesimal time interval $\Delta t$, and their probabilities, as well as associated state changes, are given in the following table
\renewcommand{\arraystretch}{1.6}
\begin{table}[H]
	\centering
	\caption{Possible state changes $\Delta\textbf{Y}$ during a small time interval $\Delta t$.}
	\begin{tabular}{|l|l|l|}
		\hline
		$i$& $(\Delta\textbf{Y})_i^T$ & Probability $P_i\Delta t$ \\ \hline
		1  & $(-1,0)$  & $aY_1(t)\Delta t$ \\ \hline
		2  & $(-1,0)$  & $bY_1(t)Y_2(t)\Delta t$ \\ \hline
		3  & $(2,-1)$ & $cY_2(t-\tau_3)\Delta t$ \\ \hline
		4  & $(-1,0)$ & $dY_1(t)^2\Delta t$ \\ \hline
		5  & $(0,1)$ & $dY_1(t-\tau_4)^2\Delta t$ \\ \hline
		6  & $(0,0)$  & $1-\sum\limits_{i=1}^{5} P_i\Delta t$ \\ \hline
	\end{tabular}
	\label{Ex2_table}
\end{table}
\setcounter{figure}{1}
\begin{figure}
	\centering
	\includegraphics[width=1\linewidth]{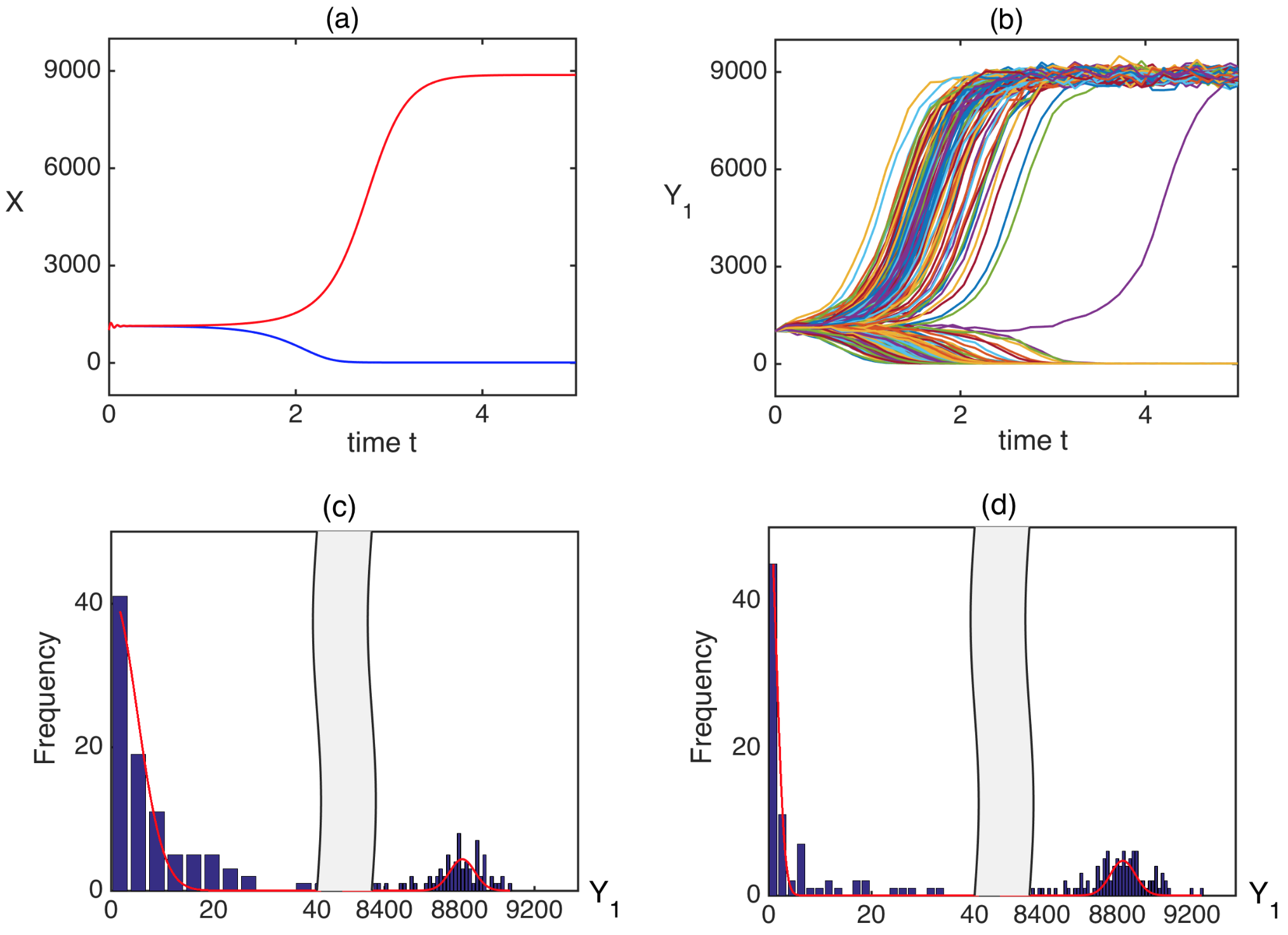}
	\caption{\small (a) Numerical solution of the deterministic model (\ref{ex2_det}) with initial conditions $(X_0(s),Y_0(s))=(1000,395)$ (red) and $(X_0(s),Y_0(s))=(1000,392)$ (blue) on $s\in [-0.03,0]$. (b) Numerical simulation of the SDDE (\ref{ex2_SDDE}) with initial condition $(X_0(s),Y_0(s))=(1000,392)$ on $s\in [-0.03,0]$ for 200 realisations. (c) and (d) show frequency distributions at $t=3$ of values for the variable $Y_1$ using SDDE and DSSA, respectively, together with a fit to a bi-normal distribution shown in red. Parameter values are $a=10$, $b=0.005$, $c=50$, $d=0.01$, $\tau_3=0.01$, $\tau_4=0.03$.}
	\label{bist_model}
\end{figure}
This immediately gives approximations to order $\Delta t$ of the expectation vector 
\[
\displaystyle{\mathbb{E}(\Delta \textbf{Y})\approx\sum\limits_{i=1}^{5}P_i(\Delta\textbf{Y})_i\Delta t=\boldsymbol{\mu}\Delta t\quad 
\Longrightarrow\quad \boldsymbol{\mu}=\begin{pmatrix}
-P_1-P_2+2P_3-P_4 \\
-P_3+P_5 
\end{pmatrix},}
\]
and the covariance matrix
\[
\displaystyle{\mbox{cov}(\Delta\textbf{Y})\approx\sum\limits_{i=1}^{6}P_i(\Delta \textbf{Y})_i(\Delta \textbf{Y}_i)^T\Delta t=\Sigma \Delta t\quad 
\Longrightarrow\quad \Sigma=\begin{pmatrix}
	P_1+P_2+4P_3+P_4 & -2P_3 \\
	-2P_3 & P_3+P_5 
	\end{pmatrix}.}
\]
Introducing the matrix $Q$ as follows,
\[
Q=\begin{pmatrix}
\sqrt{P_1+P_2+P_4} & 2\sqrt{P_3} & 0 \\
0 & -\sqrt{P_3} & \sqrt{P_5}
\end{pmatrix},
\]
ensures that it satisfies the condition $QQ^T=\Sigma$, and therefore, the It\^{o} SDDE model for system (\ref{ex2_syst}) has the form
\begin{equation}\label{ex2_SDDE}
\begin{cases}
d\textbf{Y}(t)=\boldsymbol{\mu}dt+Qd\textbf{W}(t), \\
\textbf{Y}(t)=\boldsymbol{\varphi}(t)\quad \mbox{for}\quad t\in [-\tau,0],
\end{cases}
\end{equation}
with $\textbf{W}(t)=(W_1(t),W_2(t),W_3(t))^T$ being a vector of three independent Wiener processes, $\tau=\max\{\tau_3,\tau_4\}$, and $\boldsymbol{\varphi}(t)$ being the vector of initial conditions. Similarly to the first example, we have reduced the number of independent Wiener processes required for computation from 5 to 3.

For each particular choice of time delays $\tau_3$ and $\tau_4$, deterministic model (\ref{ex2_det}) exhibits a bistability, where for the same values of parameters, the solution approaches either a trivial steady state $(0,0)$, or a non-trivial equilibrium $(X^*,Y^*)$, depending on the initial condition. Figure~\ref{bist_model}(a) illustrates such behaviour, where for a very small difference in the initial values of $Y$ variable, the solution with higher initial $Y$ goes to $(X^*,Y^*)$, while the solution with smaller initial $Y$ approaches a steady state $(0,0)$. For the same values of parameters and time delays, if we choose initial condition that deterministically approaches the steady state $(0,0)$, in the case of SDDE model (\ref{ex2_SDDE}) we observe that solutions will approach either of the two steady states with some probability, as shown in Fig.~\ref{bist_model}(b). Comparing distribution of frequencies with an equivalent distribution obtained by solving the original model using a delayed next reaction method \cite{anderson07}, which is another exact DSSA implemented in StochPy package \cite{stochpy13}, we again observe good qualitative agreement, while having a very substantial decrease in computational time.

\begin{figure}
	\centering
	\includegraphics[width=1\linewidth]{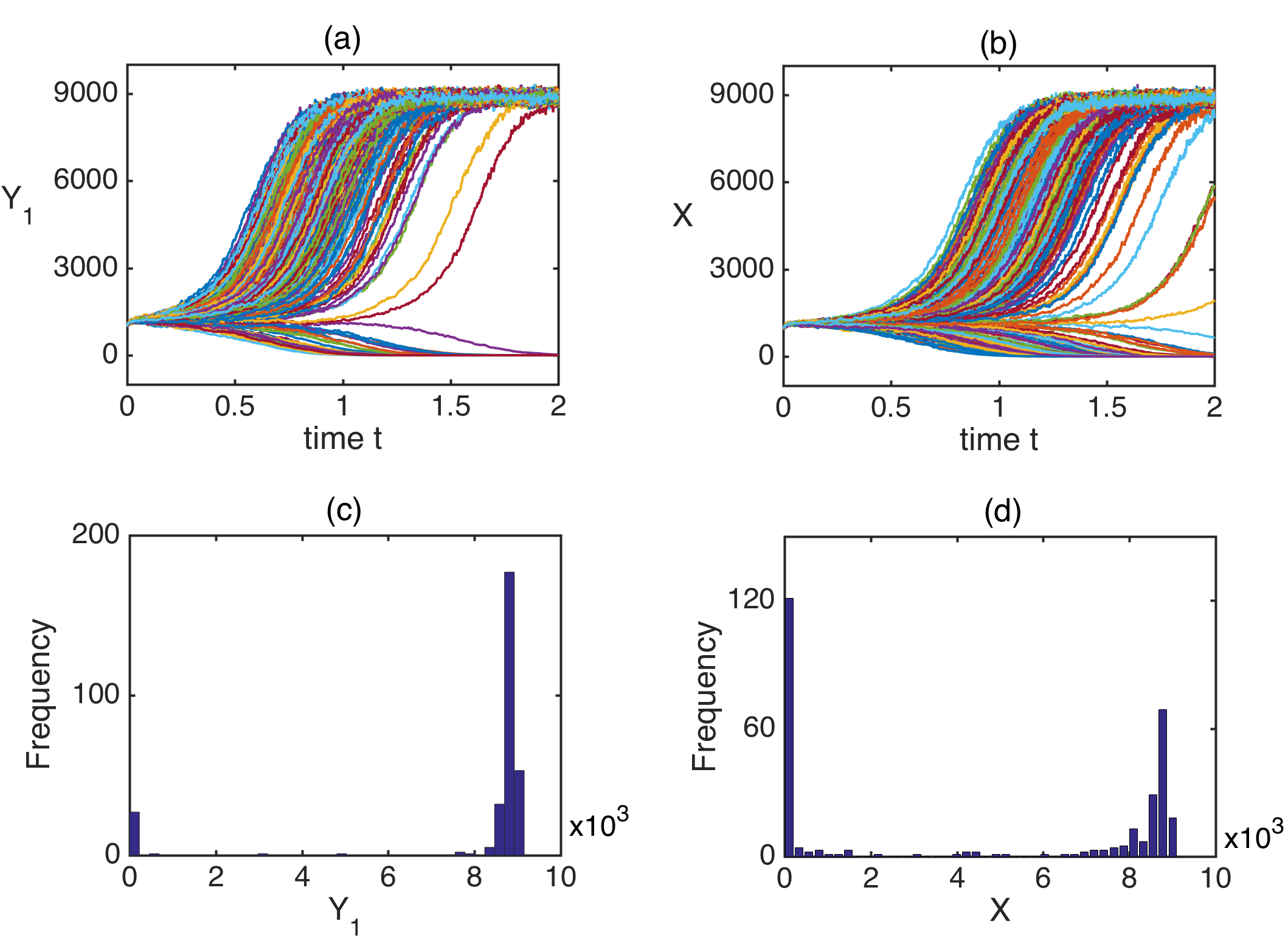}
	\caption{\small (a) Numerical solution of the SDDE (\ref{ex2_SDDE}) with initial condition $(X_0(s),Y_0(s))=(1000,392)$ on $s\in [-0.003,0]$ for 300 realisations. (b) Numerical solution of the model (\ref{ex2_syst}) using DSSA with initial condition $(X_0(s),Y_0(s))=(1000,392)$ on $s\in [-0.003,0]$ for 300 realisations. (c) and (d) show frequency distributions at $t=1.5$ of values for the variable $Y_1$ using SDDE and DSSA, respectively. Parameter values are $a=10$, $b=0.005$, $c=50$, $d=0.01$, $\tau_3=0.001$, $\tau_4=0.003$.}
	\label{error_plot}
\end{figure}

The validity of CLE approximation to CMEs has been earlier studied numerically in the context of non-delayed \cite{gillespie2000,grima11} and delayed \cite{tian2007} systems from the perspective of not very large system sizes. Going back to {\bf Remark 1}, we have looked into how the accuracy of this approximation is affected by sufficiently small delays, which can potentially violate one of the assumptions behind the derivation of the delayed CLE regarding weak coupling between system states at times $t-\tau_j$ and $t$. To investigate this issue, we have fixed the values of all parameters as in Fig.~\ref{bist_model}, keeping large numbers of species, but reduced both time delays by a factor of 10. Corresponding simulations, as shown in Fig.~\ref{error_plot}, suggest that whereas the aggregate differences between temporary profiles of solutions obtained using SDDE and the exact DSSA appear to be small, the details of those solutions are quite different. While only 10\% of solutions of SDDE model approached the trivial steady state at the end of simulation, this proportion rose to 45\% for solutions obtained using the DSSA. Furthermore, looking at distribution profiles, one observes a larger clustering of solutions close to $Y_1=0$ for the case where DSSA was used, as compared to a much higher peak around $Y_1=Y^*$ for the SDDE model. This suggests that while generally SDDE-based models of stochastic systems with delays can provide computationally efficient approximations of stochastic dynamics, in the case of very small time delays, the accuracy of the approximation provided by these models can be reduced, thus necessitating the use of DSSAs to simulate the dynamics.

\section{Discussion}

In this paper we have shown that a number of alternative formulations of SDDE models can be obtained that are all equivalent in terms of probability distribution and sample paths. Using this equivalence, we have proposed an algorithm for deriving computationally efficient It\^{o} SDDEs from the DCMEs for systems with consuming and non-consuming delayed reactions. Numerical simulations done on an example of a system with two chemical species interacting through five non-delayed, delayed non-consuming and delayed consuming reactions, show that the distributions obtained as solutions of such SDDEs provide a good approximation of the exact dynamics, but are significantly faster than delayed stochastic simulation algorithms. Similarly, good agreement was observed between the results of an SDDE formulation for a chemical reaction model with bistability and an exact solution computed using a DSSA. It is important to note, though, that SDDE models described in this paper can only provide accurate approximations of underlying stochastic dynamics in certain regimes, and the accuracy of this approximation can deteriorate for small delays, as was observed in the example with bistability.

In many scenarios, discrete time delays (which effectively are represented by $\delta$-functions) provide reasonable approximation for various biological processes that happen non-instantaneously. However, in some cases such description is not adequate, and it would be more appropriate to represent time delays by proper distributions \cite{laf13}. One example is stochastic models of epidemics, where distribution of infectious periods is much closer to a $\Gamma$-distribution, which interpolates between constant and exponentially distributed infectious periods \cite{lloyd00}. Representing such a distribution by a number of infectious stages, with individuals progressing through stages and staying in each stage for exponentially distributed periods of time, it is possible to derive a master equation describing the dynamics, from which a power spectrum of stochastic oscillations can be analytically obtained \cite{black09}. Using such an approach, known in other contexts as a `linear chain trick' \cite{mcdonald}, one effectively avoids the need for having a delayed distribution in the model, thus making the resulting system of SDEs much easier to solve numerically. A somewhat similar strategy, but in reverse, was proposed by Barrio et al. \cite{barrio13} to abridge large chains of consecutive reactions by lumping reaction sequences into smaller systems, which provided an improvement when using a stochastic simulation algorithm. However, in many realistic situations it does not prove possible to simplify the delay distribution, and in the future we will consider how one could generalise an approach presented in this paper for stochastic models with distributed delays.

\bibliographystyle{ieeetr}
\bibliography{timedelaystochastic}

\end{document}